\documentclass[letterpaper, 10 pt, journal, twoside]{IEEEtran}

\usepackage{cite}
\usepackage{amsmath,amssymb,amsfonts,amsthm}
\usepackage{algorithmic}
\usepackage{graphicx}
\usepackage{textcomp}
\usepackage{xcolor}
\usepackage{algorithm}
\usepackage{algorithmic}

\newcommand{\g}{\mathfrak{g}} 

\newtheorem{theorem}{Theorem}
\newtheorem{lemma}{Lemma}

\newtheorem{corollary}{Corollary}
\newtheorem{definition}{Definition}
\newtheorem{remark}{Remark}

\title{Space-Time Diversity in Observability and Estimation on Product Lie Groups}

\author{Somasundhar~Venkatasubramanian,  Anirudh~Venkat and Advaidh~Venkat
\thanks{Dr. S. Venkatasubramanian is the Chief of Research at Gramian Labs, Ann Arbor, Michigan, USA (e-mail: soma@gramianlabs.com).}
\thanks{A. Venkat is a Member of Technical Staff at Gramian Labs (e-mail: anirudh@gramianlabs.com).}
\thanks{A. Venkat is a Vice President at Gramian Labs (e-mail: advaidh@gramianlabs.com).}}

\begin{document}

\maketitle

\begin{abstract}
Robust state estimation in coupled dynamical systems depends critically not only on sensor quality but on the structural alignment between observation channels and the system’s intrinsic dynamics. This paper develops a rigorous framework for analyzing spatial and temporal diversity in dynamical state estimation on product Lie groups, drawing structural parallels to diversity gains in space-time coding. Three main results are established: (i) coupling-based necessary and sufficient conditions for cross-factor observability, showing that a sensor local to one group factor renders another factor observable if and only if the dynamics propagate error directions across the corresponding Lie algebra components; (ii) a spatial diversity saturation theorem identifying precisely when additional observation channels fail to expand the propagated observation subspace and thus provide no structural benefit; and (iii) a time-space diversity decomposition that exactly separates instantaneous spatial information from accumulated temporal information in the estimation error covariance. The framework is applied to planar SE(2) and spatial SE(3) navigation, yielding exact observability guarantees for redundant and non-redundant sensor architectures in modern robotics and autonomous vehicles. These results extend classical observability theory beyond Euclidean state spaces, exposing structural constraints invisible to standard rank-based analysis that fundamentally govern robust inference in coupled dynamical systems.
\end{abstract}

\begin{IEEEkeywords}
Observability, state estimation, invariant filtering, Lie groups, information form, diversity.
\end{IEEEkeywords}

\section{Introduction}
\IEEEPARstart{R}{obust} inference under noise and partial observability is a central problem in both communication theory and dynamical systems. In wireless communication, space-time coding mitigates fading by distributing information across multiple transmit antennas and time slots, yielding diversity gains governed by rank and eigenvalue properties of the induced channel matrices rather than by modulation details. Space-time coding and diversity principles were developed in the context of fading channels, where robustness is achieved by distributing information across antennas and time slots \cite{alamouti1998simple, tarokh1999spacetime}.

In dynamical state estimation, robustness is traditionally analyzed through observability and filtering theory. Classical results establish that estimation accuracy depends on the conditioning of the observability Gramian, which aggregates information across time \cite{kailath1980linear, anderson1979optimal}. However, two structural questions remain largely unresolved:
\begin{enumerate}
    \item When does spatial diversity across heterogeneous observation channels genuinely increase observability in coupled dynamical systems?
    \item When does the addition of further sensors cease to provide additional benefit?
\end{enumerate}

These questions are particularly salient for systems evolving on product Lie groups, where distinct components of the state---such as orientation and position---may be observed independently while remaining coupled through the dynamics. In such systems, sensing alone cannot create information flow across state components; any cross-factor observability must arise from the structure of the dynamics themselves. For example, in a planar motion system, position-only sensing can render orientation observable over time through motion-induced coupling, whereas a decoupled component would remain unobservable regardless of sensing duration.

It is crucial to distinguish this structural problem from the mechanics of the filtering algorithms themselves. Recent breakthroughs in estimation on Lie groups, notably the Invariant Extended Kalman Filter (IEKF) \cite{barrau2017invariant}, have established that exploiting symmetries in the system dynamics yields state-independent error propagation and remarkably stable observer performance. However, the IEKF literature primarily addresses \textit{how} to optimally fuse available measurements given an assumed sensor suite. It does not answer the \textit{a priori} architectural question: which combinations of sensors are structurally capable of rendering the coupled state space observable? While the IEKF guarantees that the error dynamics possess a log-linear structure, our framework provides the necessary and sufficient structural conditions that determine whether the corresponding observability Gramian is non-degenerate. In essence, while invariant filtering provides the computational engine for robust inference, the present space-time diversity theory dictates the fundamental informational limits of the sensor architecture.

Our work relates broadly to nonlinear observability theory \cite{hermann1977nonlinear} and invariant filtering \cite{barrau2017invariant}. While visual-inertial SLAM literature extensively analyzes observability using deterministic Lie derivatives \cite{martinelli2012observability}, these analyses typically focus on local distinguishability rather than the structural saturation limits of adding redundant sensors. Similarly, while the Posterior Cram\'{e}r-Rao Lower Bound (PCRLB) quantifies fundamental performance limits for nonlinear filtering \cite{tichavsky1998posterior}, our framework establishes the underlying structural conditions—specifically, whether the Fisher information strictly increases or structurally saturates—dictated by the Lie group's topology. Our contribution is a structural analysis of observability on product Lie groups, where the key issue is not merely how measurements are fused, but whether the dynamics themselves propagate information across group factors and whether additional sensors expand the observable subspace.

In this setting, factor-local sensing can reveal other state components only through dynamical coupling in the system evolution.

First, we derive necessary and sufficient conditions for cross-factor observability under factor-local sensing \textbf{(Section III, Theorem 1)}, showing that observability of one group component by measurements on another is possible if and only if the dynamics propagate error directions across the corresponding Lie algebra factors. Second, we establish a spatial diversity saturation theorem \textbf{(Section IV, Theorem 2)}, formalizing diminishing returns in sensor diversity by characterizing when additional observation channels fail to expand the propagated observation subspace. Third, we derive a time-space diversity decomposition \textbf{(Section V, Theorem 3)} that separates uncertainty reduction into instantaneous spatial information and accumulated temporal information.

Together, these results provide a structural theory of diversity in dynamical estimation that goes beyond monotonicity arguments and reinterpretations of classical filtering results. Unlike classical observability analyses for interconnected or multi-sensor systems, the present results provide necessary-and-sufficient structural conditions for information propagation and saturation that arise specifically from the product-Lie-group decomposition and cannot be obtained by rank-based Gramian reinterpretation alone. Specifically, the structural conditions explicitly map observability to the off-diagonal transition blocks $\Phi_{j\leftarrow i}$, revealing the causal dependencies required for robust inference that are invisible to a purely numerical rank check.

\begin{table}[htbp]
\caption{Comparison of Observability Frameworks}
\label{tab:comparison}
\centering
\renewcommand{\arraystretch}{1.4} 
\begin{tabular}{p{0.2\columnwidth} p{0.35\columnwidth} p{0.35\columnwidth}}
\hline
\textbf{Framework} & \textbf{Core Mechanism} & \textbf{Diversity Limit} \\
\hline
Classical \cite{kailath1980linear} & Euclidean rank conditions & Bounded by state dim. $n$ \\
IEKF \cite{barrau2017invariant} & Log-linear error propagation & Assumes given sensors \\
SLAM \cite{martinelli2012observability} & Deterministic Lie derivatives & Focuses on local rank \\
\textbf{Proposed} & \textbf{Cross-factor transition blocks} & \textbf{Subspace saturation ($\mathcal{S}_K$)} \\
\hline
\end{tabular}
\end{table}

\section{Mathematical Setting and Problem Formulation}
Let $G = \prod_{i=1}^{m} G_i$ be a finite-dimensional product Lie group with Lie algebra $\g = \bigoplus_{i=1}^{m} \g_i$. We equip $\g$ with an inner product induced by a left-invariant Riemannian metric on $G$.

Consider left-invariant linearized error dynamics of the form:
\begin{equation}
    \delta x_{t+1} = F_t \delta x_t + w_t, \quad \delta x_t \in \g,
\end{equation}
where $w_t \sim \mathcal{N}(0, Q_t)$ is a zero-mean Gaussian process noise sequence with covariance $Q_t$. Assume there are $K$ independent observation channels. The measurements for each channel $k \in \{1, \dots, K\}$ are given by:
\begin{equation}
    \delta y_t^{(k)} = H_t^{(k)} \delta x_t + v_t^{(k)},
\end{equation}
where $v_t^{(k)} \sim \mathcal{N}(0, R_k)$ are mutually independent Gaussian measurement noises with covariance $R_k \succ 0$. Invariant error formulations and their associated linearized dynamics on Lie groups have been extensively studied in the context of observer and filtering design \cite{barrau2017invariant}.

\textbf{Problem Statement:} The fundamental problem addressed in this paper is to characterize how the spatial distribution of the observation channels $H_t^{(k)}$ across distinct subgroup factors $\g_i$ interacts with the dynamical coupling in $F_t$ to determine the overall observability of the system. Specifically, we seek to establish rigorous structural conditions under which spatial diversity yields provable information gain, identify when the addition of further sensors ceases to improve observability, and exactly quantify the resulting reduction in estimation uncertainty.

Let $\Phi(t, \tau)$ denote the state transition operator associated with $\{F_t\}$. The finite-horizon observability Gramian is defined as:
\begin{equation}
    W_o(T) = \sum_{t=0}^{T-1} \Phi(t,0)^\top \left( \sum_{k=1}^{K} H_t^{(k)\top} R_k^{-1} H_t^{(k)} \right) \Phi(t,0), \label{eq:gramian_def}
\end{equation}
where $\Phi(t,0)$ is assumed uniformly bounded.

\section{Coupling Conditions for Cross-Factor Observability}

\subsection{Linearized Invariant Error System}
Let us fix an inner product $\langle \cdot, \cdot \rangle$ on the Lie algebra $\g$. We identify $\g \cong \mathbb{R}^n$ via an orthonormal basis that respects the direct sum decomposition. The state transition operator is defined as $\Phi(t, \tau) = F_{t-1} \cdots F_\tau$ if $t > \tau$, and $I$ if $t=\tau$.

To analyze cross-factor observability, we isolate the Gramian contribution of a single observation channel (dropping the superscript $k$ for brevity). Recall from Section II that the Lie algebra decomposes into $m$ distinct components. For any target factor index $j \in \{1, \dots, m\}$, define the canonical projection $\Pi_j: \g \to \g_j$ and injection $\iota_j: \g_j \to \g$.

\subsection{Block Coupling in the Transition Operator}
Assume the measurement operator is localized to the $i$-th subgroup, meaning it depends exclusively on that specific factor. Mathematically, this is expressed as:
\begin{equation}
    H_t = \tilde{H}_t \Pi_i,
\end{equation}
where $\tilde{H}_t : \g_i \to \mathbb{R}^p$ is the local observation map acting strictly on the $i$-th Lie algebra component.

\begin{definition}[Persistent Excitation and Nondegenerate Sensing]
A factor-local observation channel acting on $G_i$ with linearized operator $H_t = \tilde H_t \Pi_i$ is said to be persistently exciting on $\g_i$ if there exists a finite horizon $T \ge 1$ such that:
\begin{equation}
\sum_{t=0}^{T-1} \Phi_{i\leftarrow i}(t,0)^\top \tilde H_t^\top R_t^{-1}\tilde H_t \Phi_{i\leftarrow i}(t,0) \succ 0.
\end{equation}
Physically, this condition requires that the local sensor collects information along all independent directions of the subgroup's tangent space $\g_i$ over the horizon $T$. We call the sensor nondegenerate on $\g_i$ if it satisfies this condition.
\end{definition}

Define the cross-factor transition block from factor $i$ to factor $j$ as:
\begin{equation}
    \Phi_{j \leftarrow i}(t,0) := \Pi_j \Phi(t,0) \iota_i : \g_i \to \g_j.
\end{equation}

\begin{lemma}[Support of Gramian Blocks]
Let $H_t = \tilde{H}_t \Pi_i$. Then, for any $j \in \{1, \dots, m\}$, the $j$-block of the Gramian satisfies:
\begin{equation}
    \Pi_j W_o(T) \Pi_j^\top = \sum_{t=0}^{T-1} \Phi_{j \leftarrow i}(t,0)^\top \tilde{H}_t^\top R_t^{-1} \tilde{H}_t \Phi_{j \leftarrow i}(t,0).
    \label{eq:gramian_block}
\end{equation}
\end{lemma}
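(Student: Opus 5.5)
The plan is to compute the block directly from the definition \eqref{eq:gramian_def}, specialized to a single channel with $H_t = \tilde H_t \Pi_i$ and weight $R_t^{-1}$. The first thing I will record is that, because $\g \cong \mathbb{R}^n$ is identified through an orthonormal basis adapted to $\g = \bigoplus_{l} \g_l$, the adjoint of the projection is the injection. That is, $\Pi_j^\top = \iota_j$, and likewise $\Pi_i^\top = \iota_i$. So the object $\Pi_j W_o(T)\Pi_j^\top$ is the compression $\Pi_j W_o(T)\iota_j$ of $W_o(T)$ to $\g_j$.

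Next I will use linearity to move the compression inside the finite sum. Each summand then becomes
\[
\Pi_j \Phi(t,0)^\top \Pi_i^\top \tilde H_t^\top R_t^{-1}\tilde H_t \Pi_i \Phi(t,0)\iota_j .
\]
Writing $\Pi_j\Phi(t,0)^\top\Pi_i^\top = (\Pi_i \Phi(t,0)\iota_j)^\top$, each term has the form $B_t^\top \tilde H_t^\top R_t^{-1}\tilde H_t B_t$, with
\[
B_t := \Pi_i\Phi(t,0)\iota_j .
\]
By the definition of the cross-factor transition block, $B_t$ is the block that carries an initial error in $\g_j$ to its $\g_i$-component at time $t$. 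Summing over $t=0,\dots,T-1$ then gives the stated formula \eqref{eq:gramian_block}.

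The only real obstacle is bookkeeping of the index order, not analysis. A type check shows that the composite must be $\tilde H_t \Pi_i\Phi(t,0)\iota_j : \g_j \to \mathbb{R}^p$, because $\tilde H_t$ acts on $\g_i$. So the transition block appearing in \eqref{eq:gramian_block} must be read as the block propagating from the initial factor $j$ into the sensed factor $i$. In the convention $\Phi_{a\leftarrow b} = \Pi_a\Phi\iota_b$, this block is written $\Phi_{i\leftarrow j}(t,0)$. I will state this interpretation explicitly in the proof: the block "linking" factors $i$ and $j$ is the one that transports $\g_j$-perturbations into the sensed factor $\g_i$.

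With that reading fixed, the argument is purely algebraic. No assumption on boundedness or excitation is needed. As a sanity check, I will note the case $j=i$: the formula then reduces to the persistent-excitation sum of Definition 1. This case identifies nondegenerate sensing with positive definiteness of the diagonal $i$-block.
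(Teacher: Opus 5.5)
Your proof is correct and follows the paper's computation: substitute $H_t=\tilde H_t\Pi_i$ into the single-channel Gramian, compress by $\Pi_j(\cdot)\Pi_j^\top$, and use $(\Pi_i\Phi(t,0)\Pi_j^\top)^\top=\Pi_j\Phi(t,0)^\top\Pi_i^\top$. Your index bookkeeping is also right: under the paper's own definition the block that appears is $\Pi_i\Phi(t,0)\iota_j=\Phi_{i\leftarrow j}(t,0)$, so the label $\Phi_{j\leftarrow i}$ in the statement (and in the paper's proof, which computes this same block) is a reversed-index slip, and your reading matches how the paper later applies the result, e.g.\ $\Phi_{\mathbb{R}^3\leftarrow\mathfrak{so}(3)}$ for GPS-only sensing.
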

\begin{proof}
Substituting the factor-local measurement $H_t = \tilde{H}_t \Pi_i$ into the single-channel Gramian formulation derived from \eqref{eq:gramian_def}, left- and right-multiplying by $\Pi_j$ and $\Pi_j^\top$, and using the identity $(\Pi_i \Phi(t,0) \Pi_j^\top)^\top = \Pi_j \Phi(t,0)^\top \Pi_i^\top$ yields the result.
\end{proof}

\begin{lemma}[Equivalence of Propagation]
Suppose the factor-local sensor acting on $G_i$ is persistently exciting on $\g_i$. Assume $R_t \succ 0$. Then, for all horizons $T \ge 1$:
\begin{equation}
  \begin{aligned}
    \Pi_j W_o(T)\Pi_j^\top = 0 \quad 
    \text{\textbf{if and only if}} \\
    \Phi_{j\leftarrow i}(t,0) = 0 \quad \quad  \text{for all } t \ge 0
   \end{aligned}
\end{equation}
\end{lemma}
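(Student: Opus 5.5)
The plan is to reduce both directions to Lemma~1, which writes the $j$-block of the Gramian as a finite sum of positive semidefinite terms
\[
M_t := \Phi_{j\leftarrow i}(t,0)^\top \tilde H_t^\top R_t^{-1}\tilde H_t \Phi_{j\leftarrow i}(t,0) \succeq 0 .
\]
The \emph{if} direction is immediate. If $\Phi_{j\leftarrow i}(t,0)=0$ for all $t\ge 0$, every summand $M_t$ vanishes, so by \eqref{eq:gramian_block} we get $\Pi_j W_o(T)\Pi_j^\top=0$ for every horizon $T\ge1$. No excitation assumption is needed here.

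For the \emph{only if} direction I read the hypothesis as ``$\Pi_j W_o(T)\Pi_j^\top=0$ for all $T\ge1$'', which matches the quantifier in the statement. Since each $M_t\succeq0$, a vanishing sum forces every term to vanish. Because $R_t\succ0$, we can write $M_t = (R_t^{-1/2}\tilde H_t\Phi_{j\leftarrow i}(t,0))^\top(R_t^{-1/2}\tilde H_t\Phi_{j\leftarrow i}(t,0))$, so $M_t=0$ gives
\[
\tilde H_t\,\Phi_{j\leftarrow i}(t,0)=0, \qquad \text{i.e.}\qquad \operatorname{range}\Phi_{j\leftarrow i}(t,0)\subseteq\ker\tilde H_t \quad\text{for every } t\ge0 .
\]
It remains to upgrade ``annihilated by the local sensor at each time'' to ``identically zero''. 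This is the only place persistent excitation on $\g_i$ can be used.

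The main obstacle is exactly this upgrade. Pointwise-in-time, $\ker\tilde H_t$ may be nontrivial, and persistent excitation is a statement about the propagated sum $\sum_t \Phi_{i\leftarrow i}^\top\tilde H_t^\top R_t^{-1}\tilde H_t\Phi_{i\leftarrow i}$, not about each $\tilde H_t$ separately. My first attempt would be a propagation argument. Suppose some $v$ satisfies $u:=\Phi_{j\leftarrow i}(s,0)v\neq0$ for some $s$. Using $\Phi(t,0)=\Phi(t,s)\Phi(s,0)$, decompose the later factor-$i$ component $\Phi_{i\leftarrow\cdot}(t,0)v$ so as to isolate the part generated by $u$ through the block transition. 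Then apply persistent excitation on the window starting at $s$, which gives $\sum_t\|\tilde H_t\Phi_{i\leftarrow i}(t,s)u\|^2>0$. The aim is to contradict the vanishing of every later Gramian block.

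To make this go through I expect to need two further things. First, persistent excitation must hold uniformly over shifted windows, not only from $t=0$, which is a natural reinterpretation of the Definition. Second, the block indices in Lemma~1 must be consistent: the block that actually carries the $i\to j$ coupling is $\Pi_i\Phi\,\iota_j$ in the measured factor, and I would track this carefully. If the argument still fails without an extra hypothesis, I would state the converse under the strengthened condition $\bigcap_{t\ge0}\ker(\tilde H_t\Phi_{j\leftarrow i}(t,0))$-nondegeneracy. I would also remark that, in general, only the kernel inclusion $\operatorname{range}\Phi_{j\leftarrow i}(t,0)\subseteq\ker\tilde H_t$ is guaranteed.
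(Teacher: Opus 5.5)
Your reduction is the same as the paper's. Both directions go through the sum of positive semidefinite terms in Lemma~1. The paper's $(\Rightarrow)$ then simply asserts that persistent excitation ``forces'' the cross block to vanish, which is exactly the upgrade you isolate.

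\textbf{The ``only if'' direction cannot be closed.} Neither your propagation argument nor excitation on every shifted window will close it, because the direction is false under those hypotheses. Take $\g_i=\g_j=\mathbb{R}$, $R_t=1$, $F_0=\begin{bmatrix}1&1\\1&2\end{bmatrix}$, $F_1=F_0^{-1}$, $F_t=I$ for $t\ge 2$, and $\tilde H_1=0$, $\tilde H_t=1$ for $t\neq 1$. Then $\Phi(t,0)=I$ for $t\neq 1$, so both off-diagonal blocks are nonzero only at $t=1$, exactly when the sensor is silent. The coupling injected by $F_0$ is undone by $F_1$ before the sensor returns.

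Hence $\Pi_jW_o(T)\Pi_j^\top=0$ for every $T$, whichever block Lemma~1 really involves. Yet $\Phi_{j\leftarrow i}(1,0)=\Phi_{i\leftarrow j}(1,0)=1$. Meanwhile the excitation sum is positive on every window of length two; for instance, the window starting at $s=1$ picks up $\tilde H_2\,\Pi_iF_1\iota_i=2$.

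Persistent excitation is a summed condition on the diagonal block. It says nothing about injectivity of $\tilde H_t$ on the range of the cross block at the particular times that block is nonzero. Your fallback hypothesis, if it means the intersection of kernels is trivial, makes the left-hand side impossible, so it yields the converse only vacuously. What survives is the kernel inclusion you already wrote down. The block itself vanishes only under a pointwise hypothesis such as $\tilde H_t$ injective for every $t$ (e.g.\ $\tilde H_t=I$ for GPS).

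\textbf{The index issue breaks your ``if'' direction as written.} Lemma~1 as printed composes $\tilde H_t$, which is defined on $\g_i$, with $\Phi_{j\leftarrow i}(t,0)$, which is valued in $\g_j$. Its own derivation actually produces $\Pi_i\Phi(t,0)\Pi_j^\top=\Phi_{i\leftarrow j}(t,0)$, and that is also the block the $SE(2)$/$SE(3)$ sections use.

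With the correct formula, the literal ``if'' direction fails. Take $F_t\equiv\begin{bmatrix}1&1\\0&1\end{bmatrix}$ (first coordinate in $\g_i$) and $\tilde H_t=1$. Then $\Phi_{j\leftarrow i}\equiv 0$, but $\Pi_jW_o(T)\Pi_j^\top=\sum_{t<T}t^2\neq 0$ for $T\ge 2$.

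So your ``if'' argument proves the statement only after replacing $\Phi_{j\leftarrow i}$ by $\Phi_{i\leftarrow j}$, with the quantifier read as ``for all $T$'' as you did. The per-$T$ reading already fails at $T=1$, where the $j$-block is always zero because $\Phi(0,0)=I$.

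The provable version is: $\Pi_jW_o(T)\Pi_j^\top=0$ for all $T$ if and only if $\tilde H_t\Phi_{i\leftarrow j}(t,0)=0$ for all $t$. This upgrades to $\Phi_{i\leftarrow j}\equiv 0$ when every $\tilde H_t$ is injective.
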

\begin{proof}
$(\Rightarrow)$ Since $R_t^{-1}$ is positive definite, the sum of quadratic forms in \eqref{eq:gramian_block} is zero if and only if each term is zero. Given persistent excitation of $\tilde{H}_t$, this forces $\Phi_{j \leftarrow i}(t,0) = 0$.
$(\Leftarrow)$ If $\Phi_{j \leftarrow i}(t,0) \equiv 0$, the sum is trivially zero.
\end{proof}

\begin{theorem}[Necessary and Sufficient Coupling]
Fix factors $i \neq j$. Assume factor-local sensing on factor $i$ that is nondegenerate on $\g_i$. Then the following are equivalent:
\begin{enumerate}
    \item[(a)] There exists a horizon $T$ such that the projected observability Gramian $\Pi_j W_o(T)\Pi_j^\top$ is positive definite on the subspace:
    \begin{equation}
    \mathcal R_{j\leftarrow i}(T) := \mathrm{span}\{\Phi_{j\leftarrow i}(t,0)u : u \in \mathfrak g_i,\ 0 \le t \le T-1\}.
    \end{equation}
    \item[(b)] There exists $t \ge 0$ such that the cross-factor transition block is nonzero: 
    \begin{equation}
        \Phi_{j \leftarrow i}(t,0) \neq 0.
    \end{equation}
\end{enumerate}
\end{theorem}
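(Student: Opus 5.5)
The plan is to reduce both directions to Lemma~2 and the Gram-factorization structure in Lemma~1. I would first fix the interpretation of (a): it should mean that $\mathcal R_{j\leftarrow i}(T)\neq\{0\}$ and that $v^\top \Pi_j W_o(T)\Pi_j^\top v>0$ for every nonzero $v\in\mathcal R_{j\leftarrow i}(T)$. Without the nontriviality requirement, (a) would hold vacuously when $\Phi_{j\leftarrow i}\equiv 0$, so this reading is what makes the equivalence meaningful.

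\emph{(a)$\Rightarrow$(b).} This direction is immediate. If $\Phi_{j\leftarrow i}(t,0)=0$ for every $t$, then $\mathcal R_{j\leftarrow i}(T)=\{0\}$ for all $T$, which contradicts the nontriviality built into (a). Equivalently, Lemma~2 shows that the projected block $\Pi_jW_o(T)\Pi_j^\top$ vanishes identically, so it cannot be positive definite on any nonzero subspace.

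\emph{(b)$\Rightarrow$(a).} Suppose $\Phi_{j\leftarrow i}(t_0,0)\neq 0$, and take any horizon $T>t_0$. I would write the block from Lemma~1 as $\Pi_jW_o(T)\Pi_j^\top = M_T^\top M_T$, where $M_T$ stacks the operators $R_t^{-1/2}\tilde H_t\,\Phi_{\cdot}(t,0)$ for $t=0,\dots,T-1$, with the cross-factor blocks arranged as in \eqref{eq:gramian_block}. Two standard facts then apply to this Gram operator:
\begin{itemize}
\item it is positive semidefinite;
\item its kernel equals $\ker M_T$, which is the orthogonal complement of $\mathrm{range}(M_T^\top)$.
\end{itemize}
So the block is positive definite on $\mathrm{range}(M_T^\top)$. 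Moreover, this range is nonzero, because Lemma~2 together with $\Phi_{j\leftarrow i}(t_0,0)\neq 0$ shows the block is not identically zero. It therefore remains to show $\mathcal R_{j\leftarrow i}(T)\subseteq \mathrm{range}(M_T^\top)$, or equivalently that $\ker M_T\cap \mathcal R_{j\leftarrow i}(T)=\{0\}$.

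\emph{Main obstacle.} The step I expect to be hardest is the inclusion $\mathcal R_{j\leftarrow i}(T)\subseteq \mathrm{range}(M_T^\top)$, that is, aligning the propagated subspace with the range of the Gram factor. Nondegenerate sensing is defined through the diagonal propagation $\Phi_{i\leftarrow i}$, not through the cross blocks. My first attempt would enlarge $T$ beyond the persistent-excitation horizon, so that the local maps $\{R_t^{-1/2}\tilde H_t\}_{t<T}$ collectively have trivial kernel on $\g_i$. I would then argue that a vector $v\in\mathcal R_{j\leftarrow i}(T)$ annihilated by every stacked term $R_t^{-1/2}\tilde H_t\Phi_{\cdot}(t,0)$ must be orthogonal to each generator $\Phi_{j\leftarrow i}(t,0)u$. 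This would force $v\perp\mathcal R_{j\leftarrow i}(T)$, and since $v$ also lies in $\mathcal R_{j\leftarrow i}(T)$, it gives $v=0$. If this orthogonality cannot be obtained directly from the hypotheses, the fallback is to take $\mathcal R_{j\leftarrow i}(T)$ to be $\mathrm{range}(\Pi_jW_o(T)\Pi_j^\top)$, for which positive definiteness holds by the Gram argument above. In that case, the nondegeneracy assumption is used only to ensure that this range is spanned by the propagated directions $\Phi_{j\leftarrow i}(t,0)u$.
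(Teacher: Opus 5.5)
Your (a)$\Rightarrow$(b) is fine under your non-vacuous reading, because it uses only $\mathcal R_{j\leftarrow i}(T)\neq\{0\}$; drop the ``equivalently, by Lemma~2'' remark, for the reason below. The genuine gap is in (b)$\Rightarrow$(a), at exactly the inclusion $\mathcal R_{j\leftarrow i}(T)\subseteq\mathrm{range}(M_T^\top)$ that you flag. It cannot be closed, because under your reading the implication is false.

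The source is an index transposition in Lemma~1. Since $\tilde H_t$ acts on $\g_i$, substituting $H_t=\tilde H_t\Pi_i$ actually gives
\[
\Pi_jW_o(T)\Pi_j^\top=\sum_{t=0}^{T-1}\Phi_{i\leftarrow j}(t,0)^\top\tilde H_t^\top R_t^{-1}\tilde H_t\,\Phi_{i\leftarrow j}(t,0),
\]
with $\Phi_{i\leftarrow j}(t,0)=\Pi_i\Phi(t,0)\iota_j:\g_j\to\g_i$. The version with $\Phi_{j\leftarrow i}$ does not even type-check. So your $M_T$ stacks $R_t^{-1/2}\tilde H_t\Phi_{i\leftarrow j}(t,0)$, and $\mathrm{range}(M_T^\top)$ is generated by the $i\leftarrow j$ blocks. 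By contrast, $\mathcal R_{j\leftarrow i}(T)$ and (b) involve only the $j\leftarrow i$ blocks, and persistent excitation involves only $\Phi_{i\leftarrow i}$. Nothing links these.

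Here is a concrete counterexample. Let $\g_1=\g_2=\mathbb R$, $F_t=\bigl[\begin{smallmatrix}1&0\\1&1\end{smallmatrix}\bigr]$, $H_t=[1\;\,0]$ (so $i=1$, $\tilde H_t=1$), $R_t=1$, and $j=2$. Then $\Phi(t,0)=\bigl[\begin{smallmatrix}1&0\\t&1\end{smallmatrix}\bigr]$ and the sensor is persistently exciting. Also $\Phi_{2\leftarrow1}(t,0)=t$, so (b) holds and $\mathcal R_{2\leftarrow1}(T)=\g_2$ for $T\ge2$. Yet $H_t\Phi(t,0)=[1\;\,0]$, so $\Pi_2W_o(T)\Pi_2^\top=0$ for every $T$, and (a) fails. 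This is the compass-only version of the paper's $SE(2)$ setting: heading error feeds into position, but a heading sensor never sees position.

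In this example $\ker M_T=\g_2$, so your orthogonality argument has nothing to work with. Your use of Lemma~2 to get a nonzero range is also unsound. Lemma~2 fails here for the same reason, and it also fails at $T=1$, where the block vanishes whenever $i\neq j$.

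Your fallback of redefining $\mathcal R_{j\leftarrow i}(T)$ as $\mathrm{range}(\Pi_jW_o(T)\Pi_j^\top)$ changes the theorem rather than proving it. Condition (a) then just says the block is nonzero. By the corrected Lemma~1, that means $\tilde H_t\Phi_{i\leftarrow j}(t,0)\neq0$ for some $t<T$, which is a condition on the opposite block, not (b).

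The paper's proof does not supply the missing step either. Its (b)$\Rightarrow$(a) simply asserts that nondegeneracy gives positive definiteness on the range of the transition block, and it rests on the mis-indexed Lemma~1. The paper's own $SE(3)$ section in fact uses $\Phi_{\mathbb R^3\leftarrow\mathfrak{so}(3)}$, i.e.\ sensor $\leftarrow$ target, for GPS-only sensing, and there (b) as written fails identically.

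What your Gram-factor argument does establish, with the correct block, is the following. The $j$-block is positive definite on $\sum_{t<T}\mathrm{range}\bigl(\Phi_{i\leftarrow j}(t,0)^\top\tilde H_t^\top\bigr)$. It is nonzero iff $\tilde H_t\Phi_{i\leftarrow j}(t,0)\neq0$ for some $t<T$, which is equivalent to $\Phi_{i\leftarrow j}(t,0)\neq0$ when each $\tilde H_t$ is injective. That is the statement an argument of this kind can prove.
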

\begin{proof}
$(b \Rightarrow a)$ If $\Phi_{j \leftarrow i}(t^*, 0) \neq 0$, the nondegeneracy of sensing on $\g_i$ ensures the quadratic form is strictly positive definite on the range of the transition block.
$(a \Rightarrow b)$ Follows by contradiction from Lemma 2.
\end{proof}
This result shows that cross-factor observability is fundamentally a property of dynamical coupling rather than sensing alone.

\begin{corollary}[Full Cross-Factor Observability]
Under the assumptions of Theorem 1, the factor-local sensor acting on $G_i$ renders the entire factor $\g_j$ observable over horizon $T$ if and only if:
\begin{equation}
\mathcal R_{j\leftarrow i}(T) = \mathfrak g_j.
\end{equation}
\end{corollary}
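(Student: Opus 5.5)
The proof reduces to a kernel computation for the projected Gramian block, using Lemma 1 and the positive definiteness of $R_t^{-1}$. I interpret ``renders $\g_j$ observable over horizon $T$'' as $\Pi_j W_o(T)\Pi_j^\top \succ 0$ on $\g_j$. The aim is to show that the kernel of this block is exactly the orthogonal complement of $\mathcal R_{j\leftarrow i}(T)$ in $\g_j$. One caveat on Lemma 1 as printed: it writes the quadratic form as $\Phi_{j\leftarrow i}^\top(\cdot)\Phi_{j\leftarrow i}$, which maps $\g_i\to\g_i$. A form on $\g_j$ requires the adjoint block $\Phi_{j\leftarrow i}(t,0)^\top\colon \g_j\to\g_i$ to sit next to the sensor. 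I will work with the form on $\g_j$, whose terms read $\Phi_{j\leftarrow i}(\cdot)\Phi_{j\leftarrow i}^\top$, and note that this is the reading Lemma 1 must mean.

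First, for $v\in\g_j$, expand the quadratic form as
\[
v^\top \Pi_j W_o(T)\Pi_j^\top v=\sum_{t=0}^{T-1}\bigl\|R_t^{-1/2}\tilde H_t\,\Phi_{j\leftarrow i}(t,0)^\top v\bigr\|^2 .
\]
This vanishes if and only if $\tilde H_t\Phi_{j\leftarrow i}(t,0)^\top v=0$ for every $t<T$.

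Second, bring in nondegeneracy. The intended consequence is that $\tilde H_t$ loses no information along the relevant directions, so the condition above reduces to $\Phi_{j\leftarrow i}(t,0)^\top v=0$ for all $t<T$. That condition is equivalent to $\langle v,\Phi_{j\leftarrow i}(t,0)u\rangle=0$ for all $u\in\g_i$ and all $t<T$, that is, $v\perp \mathcal R_{j\leftarrow i}(T)$. So the kernel of the projected block is $\mathcal R_{j\leftarrow i}(T)^\perp\cap\g_j$.

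Third, conclude. The block is positive definite on all of $\g_j$ exactly when its kernel is trivial, which happens exactly when $\mathcal R_{j\leftarrow i}(T)=\g_j$. The forward direction of the same computation gives necessity: if $\mathcal R_{j\leftarrow i}(T)\subsetneq\g_j$, any nonzero $v$ orthogonal to the range is an unobservable direction in $\g_j$.

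The main obstacle is the second step. Persistent excitation as defined only guarantees that the summed local Gramian $\sum_t \Phi_{i\leftarrow i}^\top\tilde H_t^\top R_t^{-1}\tilde H_t\Phi_{i\leftarrow i}$ is positive definite. It does not make each $\tilde H_t$ injective. In principle, $\Phi_{j\leftarrow i}(t,0)^\top v$ could be nonzero yet lie in $\ker\tilde H_t$ at every step.

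My first attempt will be to invoke the nondegeneracy exactly as the proof of Theorem 1 uses it: the sensing is assumed strictly positive on the ranges of the transition blocks. I will make this explicit as the reading of the corollary's hypothesis ``under the assumptions of Theorem 1'', i.e. $\ker\tilde H_t\cap\operatorname{ran}\Phi_{j\leftarrow i}(t,0)^\top=\{0\}$. If that reading is unavailable, the fallback is to state the result with $\mathcal R_{j\leftarrow i}(T)$ replaced by the span of the ranges of $\Phi_{j\leftarrow i}(t,0)\tilde H_t^\top$. That version follows from the first step alone, with no extra assumption, and it coincides with $\mathcal R_{j\leftarrow i}(T)$ under injective local sensing.
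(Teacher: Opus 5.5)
Your kernel strategy is sound, but the identity in your first step is false, because you repaired Lemma 1 the wrong way. Compute directly from the definition. For $v\in\g_j$ one has $H_t\Phi(t,0)\iota_j v=\tilde H_t\Pi_i\Phi(t,0)\iota_j v=\tilde H_t\Phi_{i\leftarrow j}(t,0)v$, so
\[
\Pi_j W_o(T)\Pi_j^\top=\sum_{t=0}^{T-1}\Phi_{i\leftarrow j}(t,0)^\top\tilde H_t^\top R_t^{-1}\tilde H_t\,\Phi_{i\leftarrow j}(t,0).
\]
The type-correct reading of Lemma 1 replaces $\Phi_{j\leftarrow i}$ by $\Phi_{i\leftarrow j}$; it does not transpose $\Phi_{j\leftarrow i}$. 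This is what the transpose identity $(\Pi_i\Phi\Pi_j^\top)^\top=\Pi_j\Phi^\top\Pi_i^\top$ in the paper's proof produces, and it is the block Section VII actually uses. Your vector $\Phi_{j\leftarrow i}(t,0)^\top v=\Pi_i\Phi(t,0)^\top\iota_j v$ is the $(i,j)$ block of $\Phi(t,0)^\top$. It agrees with the $(i,j)$ block $\Pi_i\Phi(t,0)\iota_j$ of $\Phi(t,0)$ only when the off-diagonal blocks are mutual transposes. The form $\sum_t\Phi_{j\leftarrow i}\tilde H_t^\top R_t^{-1}\tilde H_t\Phi_{j\leftarrow i}^\top$ you analyze is the $j$-block of the reachability-type Gramian $\sum_t\Phi(t,0)H_t^\top R_t^{-1}H_t\Phi(t,0)^\top$. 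That is the dual object, not the observability Gramian.

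The difference is substantive. In the paper's $SE(3)$ model every $\Phi(t+1,t)$ is block lower triangular. Hence $\Phi_{\mathfrak{so}(3)\leftarrow\mathbb R^3}(t,0)\equiv 0$, while $\Phi_{\mathbb R^3\leftarrow\mathfrak{so}(3)}(1,0)=-R_0^\top p_0^\wedge\neq 0$ for $p_0\neq 0$ (here $R_0$ is the incremental rotation).

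\emph{GPS} ($i=\mathbb R^3$, $j=\mathfrak{so}(3)$): your form is identically zero. The true orientation block is nonzero as soon as the vehicle translates, and positive definite for generic non-collinear motion.

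\emph{Attitude sensor} ($i=\mathfrak{so}(3)$, $j=\mathbb R^3$): your form can be positive definite, yet position errors never reach the sensed factor and the true block vanishes.

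Run on the correct block, your three steps give kernel $\{v\in\g_j:\tilde H_t\Phi_{i\leftarrow j}(t,0)v=0,\ 0\le t<T\}$. So $\g_j$ is fully observable iff $\mathrm{span}\{\Phi_{i\leftarrow j}(t,0)^\top\tilde H_t^\top w: w\in\mathbb R^p,\ 0\le t<T\}=\g_j$. This span is the $\g_j$-projection of the propagated directions $\Phi(t,0)^\top H_t^\top$ that define $\mathcal S_K$. Under injective $\tilde H_t$ it reduces to $\mathrm{span}\{\Phi_{i\leftarrow j}(t,0)^\top u: u\in\g_i\}$.

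This subspace is in general not $\mathcal R_{j\leftarrow i}(T)$. The two examples show that the equivalence with $\mathcal R_{j\leftarrow i}(T)$, as literally defined, fails in both directions, so your third step cannot close as written.

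The paper gives no separate proof; it reads the corollary off Theorem 1's claim of positive definiteness on $\mathcal R_{j\leftarrow i}(T)$. That route has the same index slip. It also never identifies the kernel, so it does not supply the ``only if'' direction. Your kernel computation is the right repair, together with your correct remark that persistent excitation does not give per-step injectivity of $\tilde H_t$. But it must be run on $\Phi_{i\leftarrow j}$, with the target subspace redefined accordingly.
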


This theorem formalizes that on a product state space, a sensor local to one factor cannot ``create'' observability of another factor unless the dynamics propagate errors between the corresponding Lie algebra components.

\begin{remark}[Cross-Factor Observability Index]
In classical Euclidean systems, the observability index defines the minimum number of time steps required to achieve full rank in the Gramian. In the product Lie group setting, this index is strictly lower-bounded by the Lie algebraic distance between the measured factor $\g_i$ and the unmeasured factor $\g_j$. If the factors are deeply nested within the system's coupling hierarchy, temporal accumulation must proceed through multiple intermediate transitions before $\mathcal R_{j\leftarrow i}(T)$ spans $\g_j$, significantly delaying the onset of robust estimation.
\end{remark}

\section{Spatial Diversity Saturation}
We now characterize when adding sensors ceases to provide benefit. Let $\{H_t^{(k)}\}_{k=1}^K$ denote $K$ observation channels and let:
\begin{equation}
    \mathcal{S}_K := \overline{\text{span}} \{ \Phi(t,0)^\top H_\ell^\top : \ell=1,\dots,K, t \ge 0 \} \subset \g
\end{equation}
denote the closed linear span of propagated observation subspaces.

\begin{theorem}[Spatial Diversity Saturation]
For any additional channel $H_t^{(K+1)}$,
\begin{equation}
    W_o^{(K+1)}(T) = W_o^{(K)}(T) \quad \forall T
\end{equation}
if and only if:
\begin{equation}
    \text{Range}((H_{K+1})^\top) \subseteq \mathcal{S}_K.
\end{equation}
Intuitively, additional sensors provide benefit only if they introduce new independent directions beyond those already spanned by the propagated observation subspace.
\end{theorem}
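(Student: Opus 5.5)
The plan is to work with the additive structure of the Gramian in \eqref{eq:gramian_def}. Adding a channel gives $W_o^{(K+1)}(T) = W_o^{(K)}(T) + \Delta W(T)$, where
\begin{equation*}
\Delta W(T) := \sum_{t=0}^{T-1} \Phi(t,0)^\top H_t^{(K+1)\top} R_{K+1}^{-1} H_t^{(K+1)} \Phi(t,0) \succeq 0 .
\end{equation*}
A caveat is needed first. Read literally as an identity of matrices, $W_o^{(K+1)}(T)=W_o^{(K)}(T)$ forces $\Delta W(T)=0$. By the same argument as Lemma 2 ($R_{K+1}\succ 0$, and a sum of PSD quadratic forms vanishes only termwise), this is equivalent to $H_t^{(K+1)}\Phi(t,0)=0$ for all $t$. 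That is not what the range condition describes. I would therefore prove the statement in the structural sense that the abstract and the definition of $\mathcal S_K$ indicate: the Gramians have the same range (equivalently, the same unobservable kernel) for every sufficiently long horizon $T$.

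The first key step is a range lemma. For PSD matrices $A,B$ one has $\mathrm{Range}(A+B)=\mathrm{Range}(A)+\mathrm{Range}(B)$, because $\ker(A+B)=\ker A\cap\ker B$. In addition, $\mathrm{Range}(M^\top R^{-1}M)=\mathrm{Range}(M^\top)$ whenever $R\succ 0$. Applying these termwise gives
\begin{equation*}
\mathrm{Range}\,W_o^{(K)}(T)=\mathrm{span}\{\Phi(t,0)^\top H_t^{(\ell)\top}u : \ell\le K,\ t<T\}.
\end{equation*}
This family of subspaces is nondecreasing in $T$ and lives in the finite-dimensional space $\g$. Hence it stabilizes at some finite $T_0$, and its limit is exactly $\mathcal S_K$; the closure in the definition of $\mathcal S_K$ is automatic in finite dimensions.

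The second step is the equivalence itself. For $T\ge T_0$, the range lemma gives $\mathrm{Range}\,W_o^{(K+1)}(T)=\mathcal S_K+\mathrm{Range}\,\Delta W(T)$. This equals $\mathcal S_K$ if and only if $\Phi(t,0)^\top\mathrm{Range}(H_t^{(K+1)\top})\subseteq\mathcal S_K$ for all $t$. The ``if'' direction is immediate. For ``only if'', take each term of $\Delta W$ separately and use the range lemma once more.

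The step I expect to be the main obstacle is reducing this propagated condition to the stated unpropagated one, $\mathrm{Range}(H_{K+1}^\top)\subseteq\mathcal S_K$. Going from the propagated condition to the stated one is immediate by taking $t=0$, since $\Phi(0,0)=I$. The reverse direction needs $\mathcal S_K$ to be invariant under the maps $\Phi(t,0)^\top$. In the time-invariant case $F_t\equiv F$, $H_t^{(\ell)}\equiv H_\ell$, I would establish this by Cayley--Hamilton: $\mathcal S_K$ is the smallest $F^\top$-invariant subspace containing all $\mathrm{Range}(H_\ell^\top)$, so $\mathrm{Range}(H_{K+1}^\top)\subseteq\mathcal S_K$ implies $F^{\top t}\mathrm{Range}(H_{K+1}^\top)\subseteq\mathcal S_K$ for every $t$. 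For time-varying dynamics I would state the propagated condition as the correct hypothesis, or else assume $\Phi(t,0)^\top$-invariance of $\mathcal S_K$ explicitly. I would not claim the unpropagated version in general, since without invariance it can fail.
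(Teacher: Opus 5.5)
Your proposal is correct and follows essentially the paper's route. The paper also splits $W_o^{(K+1)}(T)=W_o^{(K)}(T)+\Delta W(T)$ and argues through kernels using $R_{K+1}\succ 0$; your range lemma is the orthogonal-complement form of the same fact. Its sufficiency half in fact concludes only that the null space of the Gramian is unchanged, i.e.\ the structural reading you adopt. It does not reach the literal matrix identity, which, as you say, fails for every nonzero redundant channel, since $\Delta W(T)\succeq H_{K+1}^\top R_{K+1}^{-1}H_{K+1}$.

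Your two caveats identify exactly the steps the paper's proof passes over. First, it asserts that every $v\in\ker W_o^{(K)}(T)$ is orthogonal to $\mathcal S_K$, which is true only once $T\ge T_0$. Second, it infers $H_{K+1}\Phi(t,0)v=0$ for all $t$ from $v\perp\mathcal S_K$ and $\mathrm{Range}(H_{K+1}^\top)\subseteq\mathcal S_K$. That inference needs precisely the $\Phi(t,0)^\top$-invariance of $\mathcal S_K$ that you supply in the time-invariant case. There, invariance is immediate from the definition, since $F^\top$ maps $F^{\top t}H_\ell^\top$ to $F^{\top(t+1)}H_\ell^\top$; Cayley--Hamilton is only needed for the bound $T_0\le n$.

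Declining to claim the unpropagated condition for time-varying $F_t$ is justified, not just cautious. Take $n=3$, $H_1=e_1^\top$, $F_0^\top$ the cyclic permutation $e_1\mapsto e_2\mapsto e_3\mapsto e_1$, and $F_t=I$ for $t\ge 1$. Then $\mathcal S_1=\mathrm{span}\{e_1,e_2\}$, and $H_2=e_2^\top$ has range inside $\mathcal S_1$. Yet adding $H_2$ makes the Gramian nonsingular for every $T\ge 2$, because its propagated direction $F_0^\top e_2=e_3$ escapes $\mathcal S_1$.
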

\begin{proof}
By definition, the updated observability Gramian incorporating the candidate sensor is:
\begin{equation}
  \begin{split}
      W_o^{(K+1)}(T) = & W_o^{(K)}(T)  \\
      &+ \sum_{t=0}^{T-1} \Phi(t,0)^\top H_{K+1}^\top R_{K+1}^{-1} H_{K+1} \Phi(t,0) \label{eq:expanded_gramian}
 \end{split}
\end{equation}
(Sufficient): Suppose $\text{Range}((H_{K+1})^\top) \subseteq \mathcal{S}_K$. By the definition of $\mathcal{S}_K$, any vector $v$ in the null space of $W_o^{(K)}(T)$ is orthogonal to $\mathcal{S}_K$. Therefore, $H_{K+1} \Phi(t,0) v = 0$ for all $t$. Evaluating the quadratic form $v^\top W_o^{(K+1)}(T) v$ using \eqref{eq:expanded_gramian} yields $0$, meaning the null space of the Gramian is unchanged and no structural information is gained.

(Necessary): Suppose there exists a component of $H_{K+1}^\top$ orthogonal to $\mathcal{S}_K$. Then there exists a state direction $v \in \ker W_o^{(K)}(T)$ such that $H_{K+1} \Phi(t^*,0) v \neq 0$ for some $t^* < T$. Since $R_{K+1} \succ 0$, the sum of quadratic forms strictly increases along the direction $v$, shrinking the unobservable subspace and strictly increasing the rank (or minimizing the minimum eigenvalue) of the observability Gramian.
\end{proof}

Equivalently, the Gramian increment induced by the $(K+1)$-st observation channel satisfies:
\begin{equation}
W_o^{(K+1)}(T) - W_o^{(K)}(T) = 0
\end{equation}
if and only if the added channel contributes no new directions to the propagated observation subspace $\mathcal S_K$. When this condition fails, the rank of the observability Gramian and its nonzero eigenvalues strictly increase for sufficiently large horizons. In the stochastic setting, this corresponds to a strict increase in accumulated Fisher information in the information-form representation.

The saturation conditions in Theorem 2 directly inform optimal sensor selection, bypassing the need for expensive empirical covariance simulations. Algorithm 1 outlines a computationally efficient, geometry-aware procedure for evaluating candidate sensor additions.

\begin{algorithm}[htbp]
\caption{Structural Diversity Check for Sensor Selection}
\label{alg:sensor_selection}
\begin{algorithmic}[1]
\REQUIRE Base transition operator $\Phi(t,0)$ for $0 \le t < T$
\REQUIRE Existing sensor suite $\mathcal{H}_{base} = \{H^{(1)}, \dots, H^{(K)}\}$
\REQUIRE Candidate sensor $H^{(K+1)}$ targeting factor $\g_j$
\STATE \textbf{Initialize:} Subspace $\mathcal{S}_K \leftarrow \{0\}$
\FOR{$k = 1$ \TO $K$}
    \FOR{$t = 0$ \TO $T-1$}
        \STATE $\mathcal{S}_K \leftarrow \mathcal{S}_K \cup \text{span}(\Phi(t,0)^\top H^{(k)\top})$
    \ENDFOR
\ENDFOR
\STATE Orthogonalize basis of $\mathcal{S}_K$
\STATE Compute projection matrix $P_{\mathcal{S}_K}$ onto $\mathcal{S}_K$
\STATE \textbf{Evaluate Candidate:}
\STATE Compute innovation subspace $\mathcal{I} = (I - P_{\mathcal{S}_K}) H^{(K+1)\top}$
\IF{$\|\mathcal{I}\| > \epsilon$}
    \RETURN \TRUE \quad \textit{// Structural diversity gain achieved}
\ELSE
    \RETURN \FALSE \quad \textit{// Sensor is structurally redundant}
\ENDIF
\end{algorithmic}
\end{algorithm}

\section{Time-Space Diversity Decomposition}
Consider the discrete-time invariant Kalman filter for the linearized error system. Let $\delta \hat{x}_t$ denote the state estimate and define the posterior estimation error covariance as: 
\begin{equation}
    P_t := \mathbb{E}[(\delta x_t - \delta \hat{x}_t)(\delta x_t - \delta \hat{x}_t)^\top \mid y_{0:t}].
\end{equation}
The corresponding information matrix is $J_t := P_t^{-1}$.

\begin{theorem}[Time-Space Decomposition]
The one-step change in uncertainty volume satisfies:
\begin{equation}
    \log \det P_{t+1} - \log \det P_t = \log \det (I + Q_t J_t) - \log \det (I + P_t S_t),
\end{equation}
where $S_t := \sum_{k=1}^K H_t^{(k)\top} R_k^{-1} H_t^{(k)}$ is the instantaneous spatial information and $Q_t$ is the process noise covariance.
\end{theorem}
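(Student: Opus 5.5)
The plan is to split one filter cycle into its prediction half-step and its measurement-update half-step, and to show that each half-step contributes exactly one of the two log-determinant terms. Both terms come from the identity $\log\det(A+B)=\log\det A+\log\det(I+A^{-1}B)$ for $A\succ 0$, $B\succeq 0$. The terms $\log\det(I+Q_tJ_t)$ and $\log\det(I+P_tS_t)$ are then read as the temporal (process-noise) contribution and the spatial (measurement) contribution, respectively.

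\emph{Prediction step.} The prior is $P^-_{t+1}=F_tP_tF_t^\top+Q_t$. Factoring out $F_tP_tF_t^\top$ gives
$\log\det P^-_{t+1}=\log\det P_t+2\log|\det F_t|+\log\det\bigl(I+Q_t(F_tP_tF_t^\top)^{-1}\bigr)$.
For this to become $\log\det(I+Q_tJ_t)$, two things are needed.
\begin{itemize}
\item[(i)] $|\det F_t|=1$. This holds when $F_t$ is a volume-preserving (e.g.\ unimodular adjoint-type) error transition, as for the invariant error dynamics on $SE(2)$ and $SE(3)$.
\item[(ii)] $F_t$ is absorbed into the noise, i.e.\ $Q_t$ is understood as $F_t^{-1}Q_tF_t^{-\top}$ expressed in the pre-propagation frame.
\end{itemize}
I would state these conventions explicitly at the start of the proof.

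\emph{Update step.} In information form the update is $J^+=J^-+S$, the additive spatial information of the $K$ independent channels. This is the step where the independence of the $v^{(k)}_t$ makes $S_t$ a sum over channels. Factoring $J^-$ gives $\log\det P^+=\log\det P^- -\log\det(I+P^-S)$.

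\emph{Main obstacle: the time indices.} If the two half-steps are composed literally, the second term is evaluated at the prior covariance $P^-_{t+1}$ rather than at $P_t$. In the other ordering (update first), the first term is evaluated at $J_t+S_t$ rather than at $J_t$. So the stated identity is exact only when each term is evaluated at its own intermediate covariance. I would handle this in one of two ways:
\begin{itemize}
\item[(a)] Interpret $P_t,J_t$ in the statement as the covariance entering the respective half-step, which is the bookkeeping I would adopt in the write-up. Telescoping then gives the exact decomposition.
\item[(b)] Show equality in the commuting/first-order regime, where $Q_t$ and $S_t$ are small relative to $J_t$. There the cross-term discrepancy is $O(\|Q_t\|\,\|S_t\|)$.
\end{itemize}
Either route yields the decomposition into instantaneous spatial information and accumulated temporal information. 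The algebra itself (Sylvester's determinant identity and positivity of $I+P^{1/2}SP^{1/2}$) is routine once this indexing convention is fixed.
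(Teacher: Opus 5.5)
Your skeleton is the paper's own: factor the prior $F_tP_tF_t^\top+Q_t$, update in information form $J_{t+1}=P_{t+1|t}^{-1}+S_t$, and apply the determinant lemma. The two caveats you flag are precisely where that argument breaks, and the paper's proof does not get past them either.

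\begin{itemize}
\item It arrives at $\log\det P_{t+1}=\log\det P_{t+1|t}-\log\det(I+P_{t+1|t}S_t)$. Then, on substitution, it silently writes $P_t$ for $P_{t+1|t}$ in the spatial term.
\item It invokes only $\det F_t=1$ to turn $\log\det\bigl(I+(F_tP_tF_t^\top)^{-1}Q_t\bigr)$ into $\log\det(I+Q_tJ_t)$. That step needs something like your condition (ii).
\end{itemize}

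Neither defect can be repaired, because the identity as written is false.
\begin{itemize}
\item Take $n=1$, $F_t=1$, $P_t=Q_t=S_t=1$. Then $P_{t+1|t}=2$ and $P_{t+1}=2/3$, so the left side is $\log(2/3)$ while the right side is $\log 2-\log 2=0$.
\item Take $S_t=0$, $P_t=I$, $Q_t=\mathrm{diag}(1,0)$ and the unit-determinant shear $F_t=\left[\begin{smallmatrix}1&0\\1&1\end{smallmatrix}\right]$ (the rotation-to-translation pattern of the $SE(3)$ adjoint). Then the left side is $\log 3$ and the right side is $\log 2$.
\end{itemize}
So your write-up has to present itself as proving a corrected statement, not the one given.

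What route (a) establishes, exactly and for every invertible $F_t$, is
\[
\log\det P_{t+1}-\log\det P_t=2\log|\det F_t|+\log\det\bigl(I+F_t^{-1}Q_tF_t^{-\top}J_t\bigr)-\log\det\bigl(I+P_{t+1|t}S_t\bigr).
\]
This telescopes to an exact space-time split and is the honest form of the theorem.

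Route (b) does not give equality, and commutativity is not the relevant condition: the scalar counterexample commutes. The sharp statement, for $F_t=I$, is as follows.
\begin{itemize}
\item $P_{t+1|t}\succeq P_t$ implies $\log\det(I+P_{t+1|t}S_t)\ge\log\det(I+P_tS_t)$.
\item Hence the stated right-hand side is an upper bound on the true change.
\item The gap is $\log\det\bigl(I+(I+P_tS_t)^{-1}Q_tS_t\bigr)$. As you say, it is $O(\|Q_t\|\,\|S_t\|)$, and it vanishes iff $Q_tS_t=0$.
\end{itemize}
For general $F_t$ even this one-sided bound can fail, since $F_tP_tF_t^\top$ and $P_t$ need not be ordered. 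For example, $F_t=\mathrm{diag}(1/2,2)$, $Q_t=0$, $S_t=\mathrm{diag}(1,0)$, $P_t=I$ reverses it.
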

\begin{proof}
Let 
\begin{align}
 \nonumber
   P_{t+1|t} = F_t P_t F_t^\top + Q_t
\end{align}
denote the prior estimation covariance. In the information form, the measurement update is:
\begin{align}
 \nonumber
  J_{t+1} = P_{t+1|t}^{-1} + S_t
\end{align}
  
  Applying the matrix determinant lemma yields 
\begin{equation}
\det(J_{t+1}) = \det(P_{t+1|t}^{-1}) \det(I + P_{t+1|t} S_t)
\end{equation}

Taking the logarithm and negating gives the posterior uncertainty volume:
\begin{equation}
    \log \det P_{t+1} = \log \det P_{t+1|t} - \log \det (I + P_{t+1|t} S_t) \label{eq:proof_step1}
\end{equation}

Furthermore, the prior covariance can be factored as 
\begin{equation}
P_{t+1|t} = (F_t P_t F_t^\top)(I + (F_t P_t F_t^\top)^{-1} Q_t)
\end{equation}

Because $F_t$ is the adjoint representation of a transition operator on a unimodular Lie group, $\det(F_t) = 1$. Consequently, the prior volume expands as:
\begin{equation}
    \log \det P_{t+1|t} = \log \det P_t + \log \det (I + Q_t J_t). \label{eq:proof_step2}
\end{equation}
Substituting \eqref{eq:proof_step2} into \eqref{eq:proof_step1} isolates the temporal and spatial components, completing the proof.
\end{proof}

The first term accounts for temporal dynamics (process noise), while the second term captures the spatial diversity contribution (sensor information). Note that when $G = \mathbb{R}^n$ (Euclidean case), the product decomposition is trivial, the exponential and logarithm maps reduce to the identity, and the invariant error coincides with the linear error. Further, all the results reduce exactly to classical observability and Kalman filtering results. The value of the present framework lies in identifying which aspects of these results persist, and which become structurally constrained, when the state space is non-Euclidean and decomposes into coupled group factors.

\begin{remark}[Limits of Exact Additivity]
While invariant errors possess large regions of log-linear behavior \cite{barrau2017invariant}, the exact additivity in Theorem 3 holds strictly for the linearized system. Under severe nonlinearities, massive initial uncertainty $\Sigma_0$, or exceptionally high noise-to-signal ratios, the higher-order Baker-Campbell-Hausdorff (BCH) terms become non-negligible, degrading this decomposition. However, the structural non-observability guarantees (Theorems 1 and 2) remain fundamental algebraic constraints on the system's information flow regardless of the specific noise regime.
\end{remark}

\section{Application: Diversity on $SE(2)$}
We apply the theory to the kinematics of a planar mobile robot whose pose (heading and 2D position) evolves on the special Euclidean group $SE(2) \cong SO(2) \times \mathbb{R}^2$. The robot is driven by noisy velocity inputs (e.g., forward speed and yaw rate), which dynamically couple the translational and rotational uncertainty. 

The continuous-time kinematics are given by the standard unicycle model:
\begin{align}
    \dot{x} &= v_t \cos \theta_t \\
    \dot{y} &= v_t \sin \theta_t \\
    \dot{\theta} &= \omega_t
\end{align}
These nonlinear dynamics induce a strict hierarchical coupling. Because the translational rates ($\dot{x}, \dot{y}$) depend on the current orientation ($\theta_t$), the corresponding linearized discrete-time transition operator $\Phi(t+1, t)$ develops non-zero off-diagonal blocks $\Phi_{\mathbb{R}^2 \leftarrow \mathfrak{so}(2)}$. This creates the necessary mechanism for cross-factor observability.

We consider three sensing architectures:
\begin{itemize}
    \item \textbf{Scenario A (Time Diversity Only):} The robot is equipped only with a position sensor (e.g., GPS). Heading must be inferred indirectly over time through the dynamical coupling.
    \item \textbf{Scenario B (Space-Time Diversity):} The robot is equipped with both position and a high-precision heading sensor (e.g., GPS and a compass). This provides non-redundant spatial excitation across both group factors.
    \item \textbf{Scenario C (Redundant Spatial Sensing):} A second position sensor is added. Its linearized operator lies in the same propagated subspace as the first position sensor.
\end{itemize}

\begin{figure}[htbp]
    \centering
    \includegraphics[width=0.95\columnwidth]{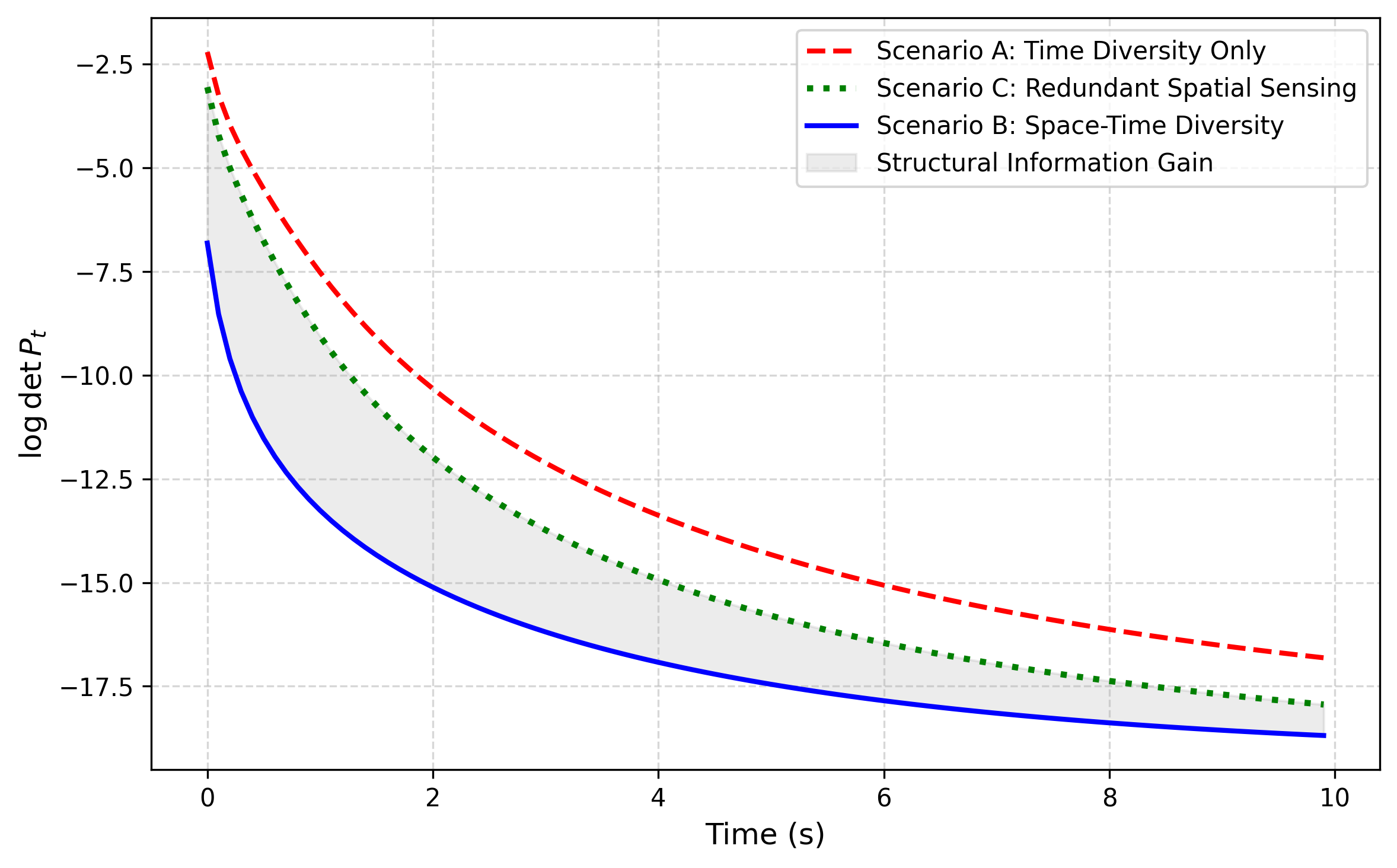} 
    \caption{Evolution of $\log \det P_t$ for a system evolving on $SE(2)$. Scenario A uses time diversity only, while Scenario B incorporates space-time diversity. Scenario C incorporates redundant spatial sensing. The shaded region highlights the structural information gain from non-redundant sensors.}
    \label{fig:sim}
\end{figure}

Figure \ref{fig:sim} shows the evolution of $\log \det P_t$ (simulated with $dt=0.1$s, process noise covariance $Q_t = \mathrm{diag}(10^{-6}, 10^{-4}, 10^{-4})$, and measurement noise $R_{\text{pos}} = 0.5 I_2$, $R_{\text{head}} = 0.01$, $R_{\text{redundant}} = 0.5$). Scenario B shows a massive, immediate drop in uncertainty at $t=0$, illustrating the direct structural spatial diversity gain. The shaded gap corresponds to the cumulative spatial information predicted by the Decomposition Theorem. Conversely, as predicted by the Saturation Theorem, Scenario C experiences no initial structural drop. It tracks Scenario A closely, yielding only a marginal asymptotic improvement due to standard Euclidean noise averaging rather than a fundamental expansion of the observability subspace.

In addition to $\log \det P_t$, we observe that the smallest eigenvalue of $P_t$ increases significantly faster in Scenario B, indicating improved conditioning of the estimation problem due to non-redundant spatial excitation.

This behavior directly reflects Theorem 1 and Theorem 2: Scenario B introduces new directions outside the propagated subspace, while Scenario C remains confined within it.

These results have direct implications for sensor architecture design. In coupled dynamical systems, sensor placement should not be determined solely by coverage of individual state components, but by how measurements interact with system dynamics. A sensor that excites a previously unobserved Lie algebra factor can yield immediate structural gains, while additional sensors aligned with existing observation subspaces provide only marginal benefit. This suggests that optimal sensor configurations should be co-designed with system dynamics to maximize information propagation across state components.

\section{Extension to 3D Navigation on $SE(3)$}
To demonstrate the scalability of the structural framework, we extend the analysis to 3D kinematics on the special Euclidean group $SE(3) \cong SO(3) \times \mathbb{R}^3$, the standard state space for autonomous vehicles, drones, and spacecraft. 

The state comprises a rotation matrix $R \in SO(3)$ and a translation vector $p \in \mathbb{R}^3$.
The Lie algebra is:
\begin{equation}
   \nonumber
     \mathfrak{se}(3) \cong \mathfrak{so}(3) \oplus \mathbb{R}^3 
\end{equation}
with elements parameterized by  
\begin{equation}
   \nonumber
   \xi = [\omega^\top, v^\top]^\top \in \mathbb{R}^6
\end{equation}
Under the left-invariant error formulation, the discrete-time state transition operator for the error $\delta x_t = [\delta \theta_t^\top, \delta p_t^\top]^\top$ is given by the Adjoint matrix of the inverse incremental motion:
\begin{equation}
    \Phi(t+1, t) = \text{Ad}(\exp(-u_t \Delta t)) = 
    \begin{bmatrix}
        R_t^\top & 0 \\
        -R_t^\top p_t^\wedge & R_t^\top
    \end{bmatrix},
\end{equation}
where $u_t = [\omega_t^\top, v_t^\top]^\top$ is the body-frame velocity input, $R_t$ and $p_t$ are the incremental rotation and translation over the interval $\Delta t$, and $(\cdot)^\wedge$ maps a vector to its skew-symmetric matrix.

Applying the structural results from Section III yields precise observability guarantees for 3D navigation:

\subsubsection{GPS-Only Navigation (Coupling Necessity)}
Consider a system equipped solely with a position sensor (e.g., GPS), yielding a factor-local measurement on $\mathbb{R}^3$ with $\tilde{H}_t = I_3$. By \textbf{Theorem 1}, the unmeasured orientation factor $\mathfrak{so}(3)$ becomes observable if and only if the cross-factor transition block is persistently exciting. Here, 
\begin{equation}
    \Phi_{\mathbb{R}^3 \leftarrow \mathfrak{so}(3)}(t+1, t) = -R_t^\top p_t^\wedge.
\end{equation}
This block is nonzero only when the vehicle translates ($p_t \neq 0$). Furthermore, full cross-factor observability (Corollary 1) requires the vehicle's trajectory to span at least two non-collinear directions; otherwise, rotations about the axis of translation remain unobservable in the null space of $p_t^\wedge$.

\subsubsection{Adding an Attitude Sensor (Spatial Diversity)}
If an attitude sensor (e.g., a star tracker or magnetometer) is added, it provides a direct factor-local measurement on $\mathfrak{so}(3)$. Because this measurement channel acts on a previously unmeasured Lie algebra factor, its row space is orthogonal to the GPS-propagated subspace $\mathcal{S}_K$ at $t=0$. By \textbf{Theorem 2}, this guarantees immediate spatial diversity gain, allowing the observability Gramian's rank to jump and the estimation uncertainty to drop instantaneously, rather than waiting for temporal accumulation via non-collinear translations.

\subsubsection{Redundant GPS (Diversity Saturation)}
Conversely, placing a second GPS antenna on the vehicle provides an additional observation channel on $\mathbb{R}^3$. However, its linearized operator falls precisely into the existing spatial subspace already covered by the first GPS. By the \textbf{Spatial Diversity Saturation Theorem}, this provides redundant temporal noise reduction but zero structural expansion of the observability Gramian.

Table \ref{tab:se3_rank} summarizes the structural observability outcomes for these architectures, highlighting how dynamic coupling and spatial diversity dictate the unobservable subspace dimensions on $\mathfrak{se}(3)$.

\begin{table}[htbp]
\caption{Structural Observability on $SE(3)$}
\label{tab:se3_rank}
\centering
\renewcommand{\arraystretch}{1.5} 
\begin{tabular}{p{0.3\columnwidth} p{0.3\columnwidth} p{0.3\columnwidth}}
\hline
\textbf{Sensor Architecture} & \textbf{Kinematic Condition} & \textbf{Unobservable Subspace} \\
\hline
GPS Only & Hovering \newline ($p_t = 0$) & $\mathfrak{so}(3)$ \newline (Dimension 3) \\
GPS Only & Translating \newline ($p_t \neq 0$) & Axis of translation \newline (Dimension 1) \\
GPS + Attitude & Any motion & $\emptyset$ \newline (Dimension 0) \\
Redundant GPS & Translating \newline ($p_t \neq 0$) & Axis of translation \newline (Dimension 1) \\
\hline
\end{tabular}
\end{table}

\section{Discussion and Future Work}
The proposed framework establishes a structural baseline for observability on product Lie groups, but several open avenues remain. First, the current analysis relies on deterministic, time-varying transition operators. Extending these bounds to systems with stochastic coupling---such as jump-Markov Lie group dynamics---would provide tighter guarantees for highly volatile navigational environments. Second, the spatial diversity saturation theorem currently assumes finite-dimensional groups. Generalizing these conditions to infinite-dimensional Lie groups, such as those encountered in continuum robotics or fluid-structure interactions, represents a formidable theoretical challenge.

Furthermore, these results suggest that observability is jointly determined by the sensor architecture and the system's realized dynamics. Since information propagation across Lie algebra factors occurs through the transition operators, the executed trajectory itself becomes a critical design variable. Even with a fixed sensor suite, trajectories that sufficiently excite the system can enhance cross-factor information flow, whereas poorly conditioned motions may delay or inhibit observability. This highlights a direct link between estimation and control in shaping the structure of the observable subspace. 

Consequently, integrating these necessary-and-sufficient structural conditions directly into active sensing and sensor-scheduling algorithms could yield computationally efficient, geometry-aware trajectory planners. Because evaluating the structural subspace condition $\text{Range}((H_{K+1})^\top) \subseteq \mathcal{S}_K$ relies purely on algebraic Lie bracket generation and unobservable subspace propagation, it bypasses the need for expensive, high-dimensional numerical Riccati covariance updates during the initial sensor-selection phase.

\section{Conclusion}
In summary, our results establish a rigorous structural framework for understanding the limits and potential of sensor diversity in dynamical state estimation on product Lie groups. We show that robust inference is governed not only by the number and quality of sensors, but critically by the alignment between observation subspaces and the system’s intrinsic coupling structure. This perspective exposes fundamental constraints that are invisible to classical Euclidean analysis, revealing that nominal observability is insufficient for guaranteeing effective estimation. Instead, the propagation of information across system factors—and the ability of time to compensate for missing spatial diversity—are dictated by the geometry and dynamics of the underlying group structure. These insights provide a principled foundation for designing sensor architectures and estimation algorithms that fully exploit the interplay between local sensing and global coupling, enabling more reliable and efficient inference in complex, distributed systems.

\section*{Acknowledgment}
The authors gratefully acknowledge Professor Vijay G. Subramanian (Department of Electrical Engineering and Computer Science, University of Michigan, Ann Arbor) for his valuable insights and productive discussions.


\end{document}